\documentclass[conference]{IEEEtran}
\IEEEoverridecommandlockouts
\usepackage[utf8]{inputenc}
\usepackage{cite}
\usepackage{overpic}
\usepackage{amsmath,amssymb,amsfonts}
\usepackage{bm}
\usepackage{graphicx}
\usepackage{textcomp}
\usepackage{xcolor}
\usepackage{float}
\usepackage{amsthm}
\usepackage{graphicx}
\usepackage{epstopdf}
\usepackage{amsmath,bm,bbm}
\usepackage{amsfonts}
\usepackage{amssymb}
\usepackage{color}
\usepackage{multirow}
\usepackage{multicol}
\usepackage{soul,xcolor}
\usepackage{setspace}
\newtheorem{lemma}{Lemma}

\usepackage{longtable}
\usepackage{mathtools}
\usepackage{subfig}
\usepackage{tabulary}
\usepackage{epsfig}
\usepackage{caption}
\usepackage{booktabs}
\usepackage{blindtext}
\usepackage{adjustbox}
\usepackage{hyperref}
\usepackage{dirtytalk}
\usepackage{comment} 
\usepackage{tikz}
\usetikzlibrary{positioning} 
\usepackage{algorithm}

\usepackage{algpseudocode}

\definecolor{mypurple}{HTML}{9933FF}
\definecolor{mygreen}{HTML}{009900}

\DeclareMathOperator{\tr}{tr}

\theoremstyle{plain}

\def\diag{\mathrm{diag}}

\def\tr{\mathrm{tr}}

\def\Htran{\mbox{\tiny $\mathrm{H}$}}
\def\Ttran{\mbox{\tiny $\mathrm{T}$}}
 
\def\imagunit{\mathsf{j}} 
\begin{document}
\bstctlcite{IEEEexample:BSTcontrol}
\makeatletter
\newcommand*{\rom}[1]{\expandafter\@slowromancap\romannumeral #1@}
\makeatother

\title{ Dual-polarized RIS versus dual-polarized network-controlled repeater  \vspace{-0.4cm}  }

\author{\IEEEauthorblockN{\"Ozlem Tu\u{g}fe Demir}
\IEEEauthorblockA{\textit{Department of Electrical and Electronics Engineering} \\
\textit{Bilkent University}\\
Ankara, Turkiye \\
E-mail: ozlemtugfedemir@bilkent.edu.tr} 
 \vspace{-8mm}
}
\maketitle
\begin{abstract}
We study the achievable rate performance of dual-polarized passive reconfigurable intelligent surfaces (RISs) and dual-polarized network-controlled repeaters (NCRs) in a point-to-point wireless link. While RISs rely on passive beamforming with large array gains, NCRs exploit active amplification across polarization branches. We develop a unified analytical framework that captures the impact of amplification, noise propagation, and deployment geometry under different power-budget models. Closed-form rate expressions are derived, and the optimal one-stream and two-stream transmission strategies for NCRs are characterized. The results reveal that RISs are competitive in short-range scenarios and benefit from large array sizes, whereas NCRs become preferable in long-range deployments due to their ability to compensate for pathloss through amplification.
\end{abstract}

\vspace{-2mm}

\section{Introduction}

Reconfigurable intelligent surfaces (RISs) have emerged as a promising technology for shaping the wireless propagation environment in a controllable manner~\cite{liu2021reconfigurable}. By applying programmable phase shifts, RISs can enhance signal propagation, particularly in scenarios where the direct link is blocked or severely attenuated. Due to their nearly passive nature, RISs offer a cost- and energy-efficient solution for coverage enhancement, albeit at the expense of requiring large surface areas to compensate for the inherent double-cascaded pathloss~\cite{bjornson2021reconfigurable}. 

Recent works have considered dual-polarized RIS architectures in different contexts \cite{ramezani2023dual,han2021dual}. By exploiting polarization diversity, one can improve spectral efficiency. In particular, leveraging orthogonal polarization branches provides additional degrees of freedom compared to conventional single-polarized systems, which is especially beneficial in line-of-sight (LOS) dominated environments.

In parallel, network-controlled repeaters (NCRs) have gained renewed attention as a practical and standardized solution for coverage extension in beyond-5G systems~\cite{wen2024shaping,willhammar2025achieving,demir2026ncr,topal2025fair}. NCRs operate by amplifying and forwarding the received signal, including noise, with minimal processing delay, thereby acting as active relays. 

Despite the growing interest in both RIS- and NCR-assisted communications, existing studies have largely focused on single-polarized models or purely numerical comparisons. In particular, while dual-polarized RIS systems have been investigated, a corresponding dual-polarized analysis for NCR-assisted links is still missing. To the best of our knowledge, this paper provides the first analytical study of dual-polarized NCR-assisted communications and establishes a unified framework for comparing it with dual-polarized RIS systems.

The main contributions of this paper are summarized as follows:
\begin{itemize}
    \item We propose a unified dual-polarized system model for RIS- and NCR-assisted communications and derive closed-form achievable rate expressions for both technologies.
    
    \item We show that, depending on the power constraints and propagation conditions, the optimal NCR strategy can switch between one-stream and two-stream transmission across polarization branches.
    
    \item We compare joint and separate power-budget formulations for NCRs and demonstrate how the additional flexibility in power allocation affects performance.
    
    \item We identify geometry-dependent operating regimes where passive RIS or NCR is preferable, thereby providing design insights for practical deployments.
\end{itemize}
\vspace{-2mm}
\section{System Model and Achievable Rates}\label{sec2}
\vspace{-2mm}

We consider an uplink system where the UE is equipped with a dual-polarized antenna, while the BS employs $M$ dual-polarized antennas. The two branches, indexed by $H$ and $V$, correspond to two orthogonal polarizations. Communication is assisted either by a passive RIS or by an NCR. We assume that the direct UE--BS link is blocked. The case where a direct link exists will be considered in the extended version of this paper.

We let the transmitted symbol vector for the two polarizations be $\mathbf{s}=
    \begin{bmatrix}
        s_H& s_V
    \end{bmatrix}^{\Ttran}\in\mathbb{C}^{2},
    \qquad
    \mathbb{E}\{\mathbf{s}\mathbf{s}^{\Htran}\}=\mathbf{Q}$,
where $\mathbf{Q}\succeq \mathbf{0}$ and $\tr(\mathbf{Q})\le P$.

For the RIS, we adopt a dual-polarized uniform linear array (ULA) model with $N$ elements per polarization \cite{ramezani2023dual}. The corresponding array response vectors are
\begin{align}
   & \mathbf{a}_{{\rm RIS},H}(\phi)\nonumber\\
    &=
    \begin{bmatrix}
        1 &
        e^{-\imagunit\frac{2\pi}{\lambda}(2\Delta_d)\sin(\phi)} &
        \cdots &
        e^{-\imagunit(N-1)\frac{2\pi}{\lambda}(2\Delta_d)\sin(\phi)}
    \end{bmatrix}^{\Ttran},
    \\
    &\mathbf{a}_{{\rm RIS},V}(\phi)
    =
    e^{-\imagunit\frac{2\pi}{\lambda}\Delta_d\sin(\phi)}
    \mathbf{a}_{{\rm RIS},H}(\phi),
\end{align}
where $\Delta_d$ is the inter-element distance and the $V$ and $H$ polarizations are stacked side by side (i.e., concatenated), as shown in \cite{ramezani2023dual}. The BS-side array responses $\mathbf{a}_{{\rm BS},H}(\phi)$ and $\mathbf{a}_{{\rm BS},V}(\phi)$ are defined similarly. 

\vspace{-2mm}
\subsection{Passive RIS-Assisted Communications}
\vspace{-1mm}

The UE--RIS channel is modeled as
\vspace{-2mm}
\begin{align}
    \mathbf{H}_1
    =
    \sqrt{\beta_1}e^{-\imagunit\theta_1}
    \begin{bmatrix}
        \mathbf{a}_{{\rm RIS},H}(\tilde\phi) & \mathbf{0}\\
        \mathbf{0} & \mathbf{a}_{{\rm RIS},V}(\tilde\phi)
    \end{bmatrix},
\end{align}
where $\beta_1>0$ is the large-scale channel gain, and $\theta_1$ represents the phase shift associated with the first RIS element. Moreover, $\tilde{\phi}$ denotes the angle-of-arrival (AoA) of the UE. The RIS--BS channel is
\vspace{-2mm}
\begin{align}
    \mathbf{H}_2
    =
    \sqrt{\beta_2}e^{-\imagunit\theta_2}
    \begin{bmatrix}
        \mathbf{a}_{{\rm BS},H}(\phi)\mathbf{a}_{{\rm RIS},H}^{\Ttran}(\phi) & \mathbf{0}\\
        \mathbf{0} & \mathbf{a}_{{\rm BS},V}(\phi)\mathbf{a}_{{\rm RIS},V}^{\Ttran}(\phi)
    \end{bmatrix},
\end{align}
where $\beta_2 > 0$ denotes the large-scale channel gain of the RIS--BS link, and $\theta_2$ represents the phase shift associated with the first RIS element and first BS antenna along this link. Moreover, $\phi$ denotes the angle-of-departure (AoD) from the RIS and, equivalently, the AoA at the BS. 
The RIS phase-shift matrix is
\vspace{-2mm}
\begin{align}
    \mathbf{\Psi}_{\rm P}
    =
    \diag(\boldsymbol{\phi}_H,\boldsymbol{\phi}_V),
\end{align}
where $\boldsymbol{\phi}_H,\boldsymbol{\phi}_V\in\mathbb{C}^N$ have unit-modulus entries.

The received signal at the BS is
\vspace{-2mm}
\begin{align}
    \mathbf{y}
    =
    \mathbf{H}_2\mathbf{\Psi}_{\rm P}\mathbf{H}_1\mathbf{s}
    +\mathbf{n},
    \qquad
    \mathbf{n}\sim\mathcal{CN}(\mathbf{0},\sigma^2\mathbf{I}_{2M}).
\end{align}
Since the end-to-end channel is block diagonal, the optimal transmit covariance is diagonal, i.e.,
\begin{align}
    \mathbf{Q}=\diag(p_H,p_V),
    \qquad
    p_H,p_V\ge 0,
    \quad
    p_H+p_V\le P.
\end{align}
Hence, the achievable rate becomes
\begin{align}
    R_{\rm RIS}
    =
    \log_2\!\left(1+\frac{p_H}{\sigma^2}\gamma_{{\rm RIS},H}\right)
    +
    \log_2\!\left(1+\frac{p_V}{\sigma^2}\gamma_{{\rm RIS},V}\right),
\end{align}
where
\begin{align}
&  \gamma_{{\rm RIS},i}
    =
    \beta_1\beta_2
    \|\mathbf{a}_{{\rm BS},i}(\phi)\|^2
    \left|
    \mathbf{a}_{{\rm RIS},i}^{\Ttran}(\phi)
    \diag(\boldsymbol{\phi}_i)
    \mathbf{a}_{{\rm RIS},i}(\tilde\phi)
    \right|^2, \nonumber\\
    &\quad i\in\{H,V\}.
\end{align}
The optimal RIS phases align the reflected terms coherently, which yields
\begin{align}
    \gamma_{{\rm RIS},H}^{\star}
    =
    \gamma_{{\rm RIS},V}^{\star}
    \triangleq
    \gamma_{\rm RIS}
    =
    \beta_1\beta_2
    M
    N^2,
\end{align}
where we used $\|\mathbf{a}_{{\rm BS},i}(\phi)\|^2 = M$ for $i\in\{H,V\}$.

In this case, the rate expression simplifies to
\begin{align}
    R_{\rm RIS}
    =
    \log_2\!\left(1+\frac{p_H}{\sigma^2}\gamma_{\rm RIS}\right)
    +
    \log_2\!\left(1+\frac{p_V}{\sigma^2}\gamma_{\rm RIS}\right).
\end{align}
Due to symmetry, the optimal power allocation is given by equal power splitting, $
    p_H^{\star}
    =
    p_V^{\star}
    =
    \frac{P}{2}$. Substituting this into the rate expression, we obtain
\begin{align}
    R_{\rm RIS}^{\star}
    =
    2
    \log_2\!\left(
    1+
    \frac{P\beta_1\beta_2MN^2}{2\sigma^2}
    \right).
\end{align}
\vspace{-4mm}

\subsection{NCR-Assisted Communications}
\vspace{-1mm}

 The UE--NCR channel is
 \vspace{-2mm}
\begin{align}
    \mathbf{H}_1=\sqrt{\beta_1}\begin{bmatrix}e^{-\imagunit \theta_{1,H}} & 0 \\ 0 & e^{-\imagunit \theta_{1,V}}\end{bmatrix}\in\mathbb{C}^{2\times 2},
\end{align}
where $\beta_1 > 0$ denotes the large-scale channel gain of the UE--NCR link. The terms $\theta_{1,H}$ and $\theta_{1,V}$ represent the phase shifts experienced by the horizontal and vertical polarization components, respectively. The NCR--BS channel is
\begin{align}
    \mathbf{H}_2
    =
    \sqrt{\beta_2}e^{-\imagunit\theta_2}
    \begin{bmatrix}
        \mathbf{a}_{{\rm BS},H}(\phi) & \mathbf{0}\\
        \mathbf{0} & \mathbf{a}_{{\rm BS},V}(\phi)
    \end{bmatrix}\in\mathbb{C}^{2M\times 2}.
\end{align}
The amplification matrix of the NCR is represented by
\vspace{-2mm}
\begin{align}
\mathbf{A}=\diag(\alpha_H,\alpha_V),
    \qquad
    \alpha_H,\alpha_V\ge 0.
\end{align}

The received signal at the BS is
\begin{align}
    \mathbf{y}
    =
    \mathbf{H}_2\mathbf{A}\mathbf{H}_1\mathbf{s}
    +
    \mathbf{H}_2\mathbf{A}\mathbf{n}_1
    +
    \mathbf{n}_2,
\end{align}
where  $\mathbf{n}_1\sim\mathcal{CN}(\mathbf{0},\sigma^2\mathbf{I}_2),
    \qquad
    \mathbf{n}_2\sim\mathcal{CN}(\mathbf{0},\sigma^2\mathbf{I}_{2M})$.

Since both the signal and noise components are decoupled across the two polarizations, the resulting channel is block diagonal. Therefore, there is no benefit from transmitting correlated signals across the two branches, and the optimal transmit covariance matrix is diagonal, i.e., $\mathbf{Q}=\diag(p_H,p_V)$. Then, the rate can be written as \cite{demir2026ncr}
\vspace{-2mm}
\begin{align}
    R_{\rm NCR}
    =
    \log_2\!\left(1+p_H\gamma_{{\rm NCR},H}\right)
    +
    \log_2\!\left(1+p_V\gamma_{{\rm NCR},V}\right),
\end{align}
where
\vspace{-2mm}
\begin{equation}
    \gamma_{{\rm NCR},i}
    =
    \frac{\alpha_i^2\beta_1\beta_2 M}{\sigma^2(1+\alpha_i^2\beta_2M)},
    \qquad i\in\{H,V\}. \vspace{-2mm}
\end{equation}

The total NCR radiated power is
\vspace{-2mm}
\begin{align}
    P_{\rm NCR}^{\rm rad}
    =
    \alpha_H^2(p_H\beta_1+\sigma^2)
    +
    \alpha_V^2(p_V\beta_1+\sigma^2).
\end{align}
\vspace{-4mm}

\begin{lemma}\label{lem:ncr_two_stream_opt}
Consider the blocked-link NCR model with identical horizontal and
vertical polarization branches. The sum-rate maximization problem
\begin{align}
\max_{p_H,p_V,\alpha_H,\alpha_V}
\quad
\sum_{i\in\{H,V\}}
\log_2\!\bigl(1+p_i g(\alpha_i)\bigr)
\end{align}
subject to
\begin{align}
p_H+p_V \le P,
\end{align}
\begin{align}
\alpha_H^2(\beta_1 p_H+\sigma^2)+\alpha_V^2(\beta_1 p_V+\sigma^2)
\le
\bar P_{\rm NCR}^{\rm rad},
\end{align}
and $\alpha_H,\alpha_V,p_H,p_V\ge 0$, where
\begin{align}
g(\alpha)=\frac{\alpha^2\beta_1\beta_2 M}{\sigma^2(1+\alpha^2\beta_2 M)},
\end{align}
admits an optimal solution of one of the following two forms:
\vspace{-2mm}
\begin{itemize}
\item \textbf{Symmetric two-stream solution:}
\begin{align}
p_H^\star=p_V^\star=\frac{P}{2},
\qquad
\alpha_H^\star=\alpha_V^\star=\sqrt{\frac{\bar P_{\rm NCR}^{\rm rad}}{\beta_1 P+2\sigma^2}},
\end{align}
with achievable rate
\begin{align}
R_{\rm sym}
=
2\log_2\!\left(
1+
\frac{\beta_1\beta_2MP\bar P_{\rm NCR}^{\rm rad}}
{2\sigma^2(\beta_1P+2\sigma^2+\beta_2M\bar P_{\rm NCR}^{\rm rad})}
\right).
\end{align}
\vspace{-2mm}

\item \textbf{One-stream solution:}
\begin{align}
(p_H^\star,p_V^\star)
=
(P,0)
\quad \text{or} \quad
(0,P),
\end{align}
and correspondingly
\begin{align}
(\alpha_H^\star,\alpha_V^\star)
=
\left(
\sqrt{\frac{\bar P_{\rm NCR}^{\rm rad}}{\beta_1P+\sigma^2}},0
\right)
\ \text{or} \ 
\left(
0,\sqrt{\frac{\bar P_{\rm NCR}^{\rm rad}}{\beta_1P+\sigma^2}}
\right),
\end{align}
with achievable rate
\begin{align}
R_{\rm os}
=
\log_2\!\left(
1+
\frac{\beta_1\beta_2MP\bar P_{\rm NCR}^{\rm rad}}
{\sigma^2(\beta_1P+\sigma^2+\beta_2M\bar P_{\rm NCR}^{\rm rad})}
\right).
\end{align}
\end{itemize}

Hence, the globally optimal solution is given by the candidate that
yields the larger rate, i.e.,
\begin{align}
R^\star=\max\{R_{\rm sym},R_{\rm os}\}.
\end{align}

\end{lemma}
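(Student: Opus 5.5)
The plan is to change variables so that both constraints become linear. Introduce the per-branch NCR radiated power $q_i=\alpha_i^2(\beta_1p_i+\sigma^2)$, so that $\alpha_i^2=q_i/(\beta_1p_i+\sigma^2)$. Substituting into $p_ig(\alpha_i)$ gives the per-branch SNR
\begin{align}
p_ig(\alpha_i)=\frac{\beta_1\beta_2M}{\sigma^2}\,f(p_i,q_i),\qquad f(p,q)=\frac{pq}{\beta_1p+\beta_2Mq+\sigma^2}.
\end{align}
The problem then becomes the maximization of $F=\sum_i\log_2\bigl(1+\tfrac{\beta_1\beta_2M}{\sigma^2}f(p_i,q_i)\bigr)$ subject to $p_H+p_V\le P$, $q_H+q_V\le\bar P_{\rm NCR}^{\rm rad}$ and nonnegativity. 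The map between $(\alpha_i,p_i)$ and $(q_i,p_i)$ is a bijection whenever $p_i\ge 0$, so the two problems are equivalent.

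Since $f$ is strictly increasing in each argument whenever the other is positive, both budgets are tight at an optimum. A branch with $p_i=0$ or $q_i=0$ contributes nothing, so all of its resources should be moved to the other branch. This yields the one-stream candidate $(P,\bar P^{\rm rad}_{\rm NCR})$ on a single branch. Plugging it in and returning to $\alpha$ via $\alpha^2=\bar P^{\rm rad}_{\rm NCR}/(\beta_1P+\sigma^2)$ reproduces $R_{\rm os}$. The symmetric point $(P/2,\bar P^{\rm rad}_{\rm NCR}/2)$ on each branch gives $\alpha^2=\bar P^{\rm rad}_{\rm NCR}/(\beta_1P+2\sigma^2)$ and $R_{\rm sym}$ after direct simplification.

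It remains to show that no asymmetric point with all four variables positive can beat these two candidates. This is the main obstacle: $f$ behaves like $pq/\sigma^2$ near the origin, so it is not concave, and that non-concavity is exactly why the one-stream solution can win. I would use the KKT conditions with multipliers $\lambda,\mu$. Since $\partial_pf=q(\beta_2Mq+\sigma^2)/D^2$ and $\partial_qf=p(\beta_1p+\sigma^2)/D^2$ with $D=\beta_1p+\beta_2Mq+\sigma^2$, dividing the two stationarity equations of each branch eliminates the outer logarithm's derivative and gives
\begin{align}
\frac{q_H(\beta_2Mq_H+\sigma^2)}{p_H(\beta_1p_H+\sigma^2)}=\frac{q_V(\beta_2Mq_V+\sigma^2)}{p_V(\beta_1p_V+\sigma^2)}=\frac{\lambda}{\mu}.
\end{align}
Together with $p_V=P-p_H$ and $q_V=\bar P^{\rm rad}_{\rm NCR}-q_H$, this defines a curve through the symmetric point. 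I will parametrize the interior stationary set along this curve as a one-variable function of $p_H$ and analyze the remaining stationarity equation, which equates the $\partial_p$ derivatives. The aim is to show that its only interior root is the symmetric one, for instance via a monotonicity argument on the difference of the two sides.

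If proving uniqueness of the root turns out to be too delicate, I would fall back on a second-order argument. Here I would show that along the feasible line $t\mapsto\bigl(tP,s(t)\bigr)$, with $s(t)$ optimized for each $t$, the restricted objective is symmetric about $t=1/2$ and has at most one interior critical point. Then its maximum lies either at $t=1/2$ or at the endpoints $t\in\{0,1\}$, and the endpoints are the one-stream solutions. In either approach the conclusion is $R^\star=\max\{R_{\rm sym},R_{\rm os}\}$.
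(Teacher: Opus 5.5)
Your change of variables $q_i=\alpha_i^2(\beta_1p_i+\sigma^2)$ is correct, and so are the tightness of both budgets, the collapse of a degenerate branch onto the one-stream point, and the values of $R_{\rm os}$ and $R_{\rm sym}$. Up to that point you follow the paper's proof, only more carefully: the paper asserts tightness, exhibits the symmetric and one-stream candidates, and picks the better one. It never rules out asymmetric interior points, and you rightly identify that as the crux. The gap is that both of your routes for closing it rest on uniqueness claims that are false.

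Normalize $u_i=\beta_1p_i/\sigma^2$, $v_i=\beta_2Mq_i/\sigma^2$, $U=\beta_1P/\sigma^2$ and $V=\beta_2M\bar P^{\rm rad}_{\rm NCR}/\sigma^2$. The branch SNR becomes $u_iv_i/(1+u_i+v_i)$, the branch rate in nats is $h(u,v)=\ln(1+u)+\ln(1+v)-\ln(1+u+v)$, and $\partial_u h=v/\bigl((1+u)(1+u+v)\bigr)$. Take $U=V>\sqrt2$ and let $r_1<r_2$ be the roots of $r^2-Ur+\tfrac12=0$. At $(u_H,v_H)=(r_1,r_1)$, $(u_V,v_V)=(r_2,r_2)$ one has
\[
\frac{r_1}{(1+r_1)(1+2r_1)}-\frac{r_2}{(1+r_2)(1+2r_2)}=\frac{(r_1-r_2)(1-2r_1r_2)}{(1+r_1)(1+2r_1)(1+r_2)(1+2r_2)}=0,
\]
and by the $u\leftrightarrow v$ symmetry the same holds for $\partial_v h$. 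So this is a KKT point with all four variables positive and $p_H\neq p_V$; for $U=V=2$ all four partial derivatives equal $1/7$. Your ratio equation holds there, so along your curve the remaining stationarity equation has three roots, $u_H\in\{r_1,U/2,r_2\}$, not only the symmetric one.

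The fallback fails for the same reason. Since $\partial_v^2h=(1+u+v)^{-2}-(1+v)^{-2}\le 0$, the objective is concave in $q_H$ for fixed $p_H$. Hence at $u_H=r_1$ your optimized second coordinate $s(\cdot)$ is exactly $v_H=r_1$, and the restricted objective has three interior critical points, $p_H/P\in\{r_1/U,\tfrac12,r_2/U\}$, not at most one. These extra points are saddles: the restricted objective decreases from the one-stream endpoint to $r_1/U$ and then increases to $\tfrac12$. So they do not contradict the lemma, but they show that no first-order uniqueness argument can prove it.

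What you need instead is a sign or value statement. One option is to show that on $(0,\tfrac12)$ the derivative of the restricted objective changes sign at most once, from negative to positive, so there is no interior local maximum. Another is a direct comparison showing every asymmetric KKT point is dominated by $\max\{R_{\rm sym},R_{\rm os}\}$. The paper's proof also leaves this step unargued, so as written neither argument establishes $R^\star=\max\{R_{\rm sym},R_{\rm os}\}$.
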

\vspace{-2mm}

\begin{proof}
Since the objective function is monotonically increasing in both $p_i$
and $u_i$, and more sensitive to increase in $p_i$, it can be shown that both the UE transmit-power constraint and the NCR
radiated-power constraint are active at the optimum. Hence, $p_H+p_V=P$
and $
\alpha_H^2(\beta_1 p_H+\sigma^2)+\alpha_V^2(\beta_1 p_V+\sigma^2)
=
\bar P_{\rm NCR}^{\rm rad}$.

Because the two polarization branches are identical, the optimization
problem is invariant to exchanging the labels $H$ and $V$. Therefore,
a symmetric stationary candidate is obtained as
\begin{align}
p_H=p_V=\frac{P}{2},
\qquad
\alpha_H=\alpha_V=\sqrt{\frac{\bar P_{\rm NCR}^{\rm rad}}{\beta_1 P+2\sigma^2}},
\end{align}
which gives the rate $R_{\rm sym}$.

Another candidate arises on the boundary, where all the transmit power
and amplification budget are assigned to a single branch. Without loss
of generality, this gives
\begin{align}
p_H=P,\quad p_V=0,
\qquad
\alpha_H=\sqrt{\frac{\bar P_{\rm NCR}^{\rm rad}}{\beta_1P+\sigma^2}},
\quad
\alpha_V=0,
\end{align}
which yields the rate $R_{\rm os}$. By symmetry, allocating all
resources to the $V$ branch gives the same rate.

Therefore, the globally optimal solution is obtained by comparing the
rates of these candidate structures and selecting the one with the
larger value, which proves the lemma.
\end{proof}

\vspace{-2mm}

\section{NCR Design Under a Joint Total-Power Budget}

As an alternative to imposing separate constraints on the UE transmit power and the NCR radiated power, we now consider a joint total-power budget that limits their sum. Specifically, we solve
\begin{subequations}
\begin{align}
\max_{\substack{p_H,p_V\ge 0\\ \alpha_H,\alpha_V\ge 0}}
\quad
&
\log_2\!\left(1+p_H\gamma_{{\rm NCR},H}\right)
+
\log_2\!\left(1+p_V\gamma_{{\rm NCR},V}\right)
\label{eq:joint_budget_obj}
\\
\text{s.t.}\quad
&
p_H+p_V\nonumber\\
&+
\alpha_H^2(\beta_1p_H+\sigma^2)
+
\alpha_V^2(\beta_1p_V+\sigma^2)
\le P_{\rm tot},
\label{eq:joint_budget_const}
\end{align}
\end{subequations}
where
\begin{align}
    \gamma_{{\rm NCR},i}
    =
    \frac{\alpha_i^2\beta_1\beta_2M}
    {\sigma^2(1+\alpha_i^2\beta_2M)},
    \qquad i\in\{H,V\}.
\end{align}

We first characterize the optimal symmetric two-stream design and then compare it with the optimal one-stream design.

\subsubsection{Optimal Symmetric Two-Stream Design}

Under the symmetric polarization model, we consider
\begin{align}
    p_H=p_V=p,
    \qquad
    \alpha_H=\alpha_V=\alpha.
\end{align}
In this case, \eqref{eq:joint_budget_const} reduces to
\begin{align}
    2p+2\alpha^2(\beta_1p+\sigma^2)\le P_{\rm tot}.
\end{align}
Since the achievable rate is increasing in $\alpha^2$ for any fixed $p$, the constraint is active at the optimum, which yields
\begin{align}
    \alpha^2
    =
    \frac{P_{\rm tot}-2p}{2(\beta_1p+\sigma^2)},
    \qquad
    0\le p\le \frac{P_{\rm tot}}{2}.
\end{align}
Substituting this into the branch SNR gives
\begin{align}
    \gamma_{\rm 2str}(p)
    =
    \frac{
    p\beta_1\beta_2M(P_{\rm tot}-2p)
    }{
    2\sigma^2\!\left(
    (\beta_1-\beta_2M)p+\sigma^2+\frac{\beta_2MP_{\rm tot}}{2}
    \right)
    }.
\end{align}
Hence, the symmetric two-stream rate is
\begin{align}
    R_{\rm 2str}(p)
    =
    2\log_2\!\left(1+\gamma_{\rm 2str}(p)\right).
\end{align}

To simplify the notation, we define
\begin{align}
    &x \triangleq 2p,\qquad
    A \triangleq P_{\rm tot},\qquad \\
    &c \triangleq \beta_1-\beta_2M,\qquad
    B_2 \triangleq 2\sigma^2+\beta_2MP_{\rm tot}.
\end{align}
Then, maximizing $R_{\rm 2str}(p)$ is equivalent to maximizing
\begin{align}
    f_2(x)
    \triangleq
    \frac{(A-x)x}{cx+B_2},
    \qquad 0\le x\le A.
\end{align}

\begin{lemma}\label{lem:joint_budget_two_stream}
The symmetric two-stream problem admits a unique interior maximizer
\begin{align}
    x_2^\star\in(0,A),
\end{align}
which is the unique positive root of
\begin{align}
    cx^2+2B_2x-AB_2=0.
\end{align}
Equivalently,
\begin{align}
    x_2^\star
    =
    \begin{cases}
    \dfrac{-B_2+\sqrt{B_2(B_2+Ac)}}{c}, & c\neq 0,\\[2ex]
    \dfrac{A}{2}, & c=0.
    \end{cases}
\end{align}
The corresponding optimal allocation is
\begin{align}
    p_H^\star=p_V^\star=\frac{x_2^\star}{2},
\end{align}
and
\begin{align}
    \alpha_H^\star=\alpha_V^\star=\alpha_2^\star,
    \qquad
    (\alpha_2^\star)^2
    =
    \frac{A-x_2^\star}{\beta_1x_2^\star+2\sigma^2}.
\end{align}
\end{lemma}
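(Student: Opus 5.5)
Since $R_{\rm 2str}(p)=2\log_2(1+\gamma_{\rm 2str}(p))$ and $\gamma_{\rm 2str}(p)=\frac{\beta_1\beta_2M}{2\sigma^2}f_2(2p)$ with a positive constant, maximizing $R_{\rm 2str}$ over $0\le p\le P_{\rm tot}/2$ is equivalent to maximizing $f_2(x)$ over $[0,A]$. The plan is a one-dimensional calculus argument on $f_2$.

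First we check that the denominator is positive on $[0,A]$. It is affine in $x$, so it suffices to look at the endpoints: $cx+B_2=B_2>0$ at $x=0$, and at $x=A$ it equals $\beta_1A+2\sigma^2>0$ because $Ac+B_2=\beta_1P_{\rm tot}+2\sigma^2$. Hence $f_2$ is smooth on $[0,A]$, vanishes at both endpoints, and is strictly positive in between. It therefore attains its maximum at an interior point, and that point is a stationary point.

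Next we differentiate. The quotient rule gives
\begin{align}
f_2'(x)=\frac{(A-2x)(cx+B_2)-c(A-x)x}{(cx+B_2)^2}=\frac{-(cx^2+2B_2x-AB_2)}{(cx+B_2)^2}.
\end{align}
So the stationary points are exactly the roots of $q(x)\triangleq cx^2+2B_2x-AB_2$ in $(0,A)$.

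The main step is uniqueness, which is immediate from the signs of $q$ at the endpoints. We have $q(0)=-AB_2<0$ and $q(A)=A(Ac+B_2)>0$, so there is a root in $(0,A)$. To show there is only one, note that $q'(x)=2(cx+B_2)>0$ on $[0,A]$ by the positivity established above. Thus $q$ is strictly increasing there and has exactly one root $x_2^\star$ in the interval. Moreover $f_2'>0$ before $x_2^\star$ and $f_2'<0$ after it, so $x_2^\star$ is the unique maximizer.

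The root is "the unique positive root" of $q$ in the following sense:
\begin{itemize}
\item For $c>0$, the roots have product $-AB_2/c<0$, so exactly one is positive.
\item For $c<0$, the other root exceeds $A$. This follows because $q(A)>0$ and $q$ is concave, so $q$ turns negative again only beyond $-B_2/c>A$, which is the vertex.
\end{itemize}
The closed form follows from the quadratic formula, $x=\bigl(-B_2\pm\sqrt{B_2^2+AB_2c}\bigr)/c$. The discriminant is nonnegative since $B_2+Ac>0$. We select the root that lies in $(0,A)$, namely the one with the $+$ sign. For $c>0$ this is the positive root. For $c<0$ we check it is the smaller root, which holds because $\sqrt{B_2(B_2+Ac)}<B_2$. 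If $c=0$, then $q$ is linear and gives $x=A/2$.

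Finally, set $p^\star=x_2^\star/2$ and use the active-constraint formula $\alpha^2=(P_{\rm tot}-2p)/(2(\beta_1p+\sigma^2))$. Substituting gives $(\alpha_2^\star)^2=(A-x_2^\star)/(\beta_1x_2^\star+2\sigma^2)$, which is the stated allocation.
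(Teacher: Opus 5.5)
Your proof is correct, but it gets uniqueness by a slightly different route from the paper's.

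\textbf{How the two arguments differ.} The paper computes $f_2''(x)=-2B_2(Ac+B_2)/(cx+B_2)^3$, which is negative on $[0,A]$. It concludes that $f_2$ is strictly concave, and then uses $f_2(0)=f_2(A)=0<f_2(x)$ to put the maximizer in the interior. You stay at first order: the numerator $q$ of $-f_2'$ satisfies $q'(x)=2(cx+B_2)>0$ on $[0,A]$ and $q(0)<0<q(A)$. Hence $f_2'$ changes sign exactly once, from positive to negative.

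Both arguments rest on the same positivity facts: $B_2>0$ and $Ac+B_2>0$, hence $cx+B_2>0$ on $[0,A]$. They are also closely linked, since the identity $q'=2(cx+B_2)$ is what makes the numerator of $f_2''$ reduce to $2B_2(B_2+Ac)$.

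\textbf{What each buys.} The paper's concavity argument gives uniqueness directly, with no intermediate-value step. Your version supplies something the paper leaves out: a justification of which root of the quadratic is meant. You correctly note that for $c<0$ both roots are positive. This is $\beta_1<\beta_2M$, the typical regime in the numerical section. So the lemma's ``unique positive root'', and the proof's ``the unique root'', should be read as the unique root in $(0,A)$, which is the $+$-sign root.

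\textbf{One small misattribution.} For $c<0$, the $+$-sign root is the smaller one simply because $c<0$ and the square root is strictly positive. The inequality $\sqrt{B_2(B_2+Ac)}<B_2$ is what makes that root positive, not what makes it smaller.
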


\begin{proof}
The first derivative of $f_2(x)$ is
\begin{align}
    f_2'(x)
    =
    -\frac{cx^2+2B_2x-AB_2}{(cx+B_2)^2},
\end{align}
and the second derivative is
\begin{align}
    f_2''(x)
    =
    -\frac{2B_2(Ac+B_2)}{(cx+B_2)^3}.
\end{align}
Since
\vspace{-2mm}
\begin{align}
    B_2>0,
    \qquad
    Ac+B_2
    =
    \beta_1P_{\rm tot}+2\sigma^2>0,
\end{align}
and
\vspace{-2mm}
\begin{align}
    cx+B_2
    =
    \beta_1x+2\sigma^2+\beta_2M(P_{\rm tot}-x)>0
\end{align}
for all $x\in[0,A]$, we have $f_2''(x)<0$ on the feasible interval. Hence, $f_2(x)$ is strictly concave and has a unique maximizer.

Moreover,
\begin{align}
    f_2(0)=f_2(A)=0,
\end{align}
while $f_2(x)>0$ for every $x\in(0,A)$. Therefore, the maximizer is interior and is given by the unique root of
\begin{align}
    cx^2+2B_2x-AB_2=0.
\end{align}
The expressions for $p_H^\star$, $p_V^\star$, and $(\alpha_2^\star)^2$ follow immediately.
\end{proof}

Using Lemma~\ref{lem:joint_budget_two_stream}, the optimal symmetric two-stream rate is
\begin{align}
    R_{\rm 2str}^\star
    =
    2\log_2\!\left(
    1+
    \frac{
    \beta_1\beta_2M\,x_2^\star(A-x_2^\star)
    }{
    2\sigma^2(cx_2^\star+B_2)
    }
    \right).
\end{align}

\subsubsection{Optimal One-Stream Design}

We next consider the boundary design where all resources are allocated to a single polarization branch. Without loss of generality, let
\begin{align}
    p_H=p,\qquad p_V=0,\qquad \alpha_H=\alpha,\qquad \alpha_V=0.
\end{align}
Then, \eqref{eq:joint_budget_const} becomes
\begin{align}
    p+\alpha^2(\beta_1p+\sigma^2)\le P_{\rm tot}.
\end{align}
Again, the constraint is active at the optimum, which gives
\begin{align}
    \alpha^2
    =
    \frac{P_{\rm tot}-p}{\beta_1p+\sigma^2},
    \qquad
    0\le p\le P_{\rm tot}.
\end{align}
The resulting SNR is
\begin{align}
    \gamma_{\rm 1str}(p)
    =
    \frac{
    p\beta_1\beta_2M(P_{\rm tot}-p)
    }{
    \sigma^2\!\left(
    (\beta_1-\beta_2M)p+\sigma^2+\beta_2MP_{\rm tot}
    \right)
    },
\end{align}
and the achievable rate is
\begin{align}
    R_{\rm 1str}(p)
    =
    \log_2\!\left(1+\gamma_{\rm 1str}(p)\right).
\end{align}

Define
\begin{align}
    B_1 \triangleq \sigma^2+\beta_2MP_{\rm tot}.
\end{align}
Then, maximizing $R_{\rm 1str}(p)$ is equivalent to maximizing
\begin{align}
    f_1(x)
    \triangleq
    \frac{(A-x)x}{cx+B_1},
    \qquad 0\le x\le A,
\end{align}
where $x\triangleq p$.

\begin{lemma}\label{lem:joint_budget_one_stream}
The optimal one-stream design admits a unique interior maximizer
\vspace{-2mm}
\begin{align}
    x_1^\star\in(0,A),
\end{align}
which is the unique positive root of
\vspace{-2mm}
\begin{align}
    cx^2+2B_1x-AB_1=0.
\end{align}
Equivalently,
\begin{align}
    x_1^\star
    =
    \begin{cases}
    \dfrac{-B_1+\sqrt{B_1(B_1+Ac)}}{c}, & c\neq 0,\\[2ex]
    \dfrac{A}{2}, & c=0.
    \end{cases}
\end{align}
The corresponding optimal one-stream rate is
\begin{align}
    R_{\rm 1str}^\star
    =
    \log_2\!\left(
    1+
    \frac{
    \beta_1\beta_2M\,x_1^\star(A-x_1^\star)
    }{
    \sigma^2(cx_1^\star+B_1)
    }
    \right).
\end{align}
\end{lemma}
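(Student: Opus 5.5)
The argument follows the proof of Lemma~\ref{lem:joint_budget_two_stream}, with $B_2$ replaced by $B_1$. Since $\log_2(1+\cdot)$ is strictly increasing and $\gamma_{\rm 1str}(p)=\beta_1\beta_2M f_1(p)/\sigma^2$ with a positive constant factor, maximizing $R_{\rm 1str}(p)$ over $0\le p\le P_{\rm tot}$ is the same as maximizing $f_1(x)$ over $[0,A]$. The plan is to show that $f_1$ is strictly concave on $[0,A]$, vanishes at both endpoints and is positive in between. This forces a unique interior maximizer, characterized by the stationarity condition.

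\textbf{Derivatives.} By the quotient rule,
\begin{align}
f_1'(x)=-\frac{cx^2+2B_1x-AB_1}{(cx+B_1)^2},\qquad
f_1''(x)=-\frac{2B_1(Ac+B_1)}{(cx+B_1)^3}.
\end{align}

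\textbf{Signs.} Each factor is positive:
\begin{itemize}
\item $B_1=\sigma^2+\beta_2MP_{\rm tot}>0$.
\item $Ac+B_1=\beta_1P_{\rm tot}+\sigma^2>0$, because the $\beta_2MP_{\rm tot}$ terms cancel.
\item On $[0,A]$, $cx+B_1=\beta_1x+\sigma^2+\beta_2M(P_{\rm tot}-x)>0$.
\end{itemize}
Hence $f_1''<0$ on $[0,A]$, so $f_1$ is strictly concave there and has at most one maximizer.

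\textbf{Interior maximizer.} We have $f_1(0)=f_1(A)=0$ and $f_1(x)>0$ for $x\in(0,A)$. The maximizer therefore lies in $(0,A)$ and satisfies $f_1'(x)=0$, which is $cx^2+2B_1x-AB_1=0$.

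\textbf{Identifying the root.} This is the main bookkeeping step.
\begin{itemize}
\item If $c=0$, the equation is linear, $2B_1x=AB_1$, so $x=A/2$.
\item If $c\neq 0$, the roots are $(-B_1\pm\sqrt{B_1^2+AB_1c})/c$. The discriminant is $B_1(B_1+Ac)>0$, so both roots are real.
\item If $c>0$, the product of the roots is $-AB_1/c<0$. Exactly one root is positive, namely the one with the $+$ sign.
\item If $c<0$, both roots may be positive, which is where care is needed. The quadratic $q(x)=cx^2+2B_1x-AB_1$ satisfies $q(0)=-AB_1<0$ and $q(A)=A(Ac+2B_1)>0$. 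So exactly one root lies in $(0,A)$, and it is the stationary point found above.
\item To show this root is the $+$ root when $c<0$, rewrite it as $x=AB_1/\bigl(B_1+\sqrt{B_1(B_1+Ac)}\bigr)$ by rationalizing. This form is positive and at most $A$, while the other root exceeds $A$.
\end{itemize}
The phrase ``unique positive root'' should accordingly be read as the unique root in $(0,A)$.

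\textbf{Optimal rate.} Substituting $x_1^\star$ into $R_{\rm 1str}$ gives the stated expression for $R_{\rm 1str}^\star$.
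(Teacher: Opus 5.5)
Your proposal is correct and takes the same approach as the paper, whose proof simply defers to Lemma~\ref{lem:joint_budget_two_stream} with $B_2$ replaced by $B_1$: strict concavity from $f_1''<0$, $f_1(0)=f_1(A)=0$, and positivity on $(0,A)$, all of which you spell out. Your extra root bookkeeping fixes an imprecision rather than changing the route: for $c<0$ (i.e., $\beta_2M>\beta_1$, common with $M=256$) both roots are positive, so the statement should indeed read ``unique root in $(0,A)$,'' and your rationalized form $x_1^\star=AB_1/\bigl(B_1+\sqrt{B_1(B_1+Ac)}\bigr)$ covers every sign of $c$; note only the harmless slip that $q(A)=A(Ac+B_1)$, not $A(Ac+2B_1)$, which is positive either way.
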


\begin{proof}
The proof is identical to that of Lemma~\ref{lem:joint_budget_two_stream}, with $B_2$ replaced by $B_1$.
\end{proof}

Under the joint total-power budget, the globally better design is determined by directly comparing the two closed-form optimum values above.

\section{Numerical Results}

In this section, we compare the achievable-rate performance of passive RIS- and NCR-assisted communications under the analytical models developed in the previous sections. We consider a BS equipped with $M=256$ dual-polarized antennas, while the RIS employs $N\in\{128,256,512\}$ elements per polarization branch. The thermal noise power is computed over a bandwidth of $50$\,MHz with a noise figure of $9$\,dB, which gives
\vspace{-2mm}
\begin{align}
    \sigma^2 [\mathrm{dBm}]
    =
    -174 + 10\log_{10}(50\times 10^6) + 9.
\end{align}
The large-scale fading coefficients are modeled as
\vspace{-2mm}
\begin{align}
    \beta_i = \beta_0 d_i^{-\kappa},
\end{align}
where $\beta_0 = 10^{-4.2}$ is the reference channel gain at $1$\,m, $\kappa=2.5$ is the pathloss exponent, and $d_i$ denotes the corresponding propagation distance. The UE--RIS/NCR link includes a vertical height difference of $10$\,m, while the RIS/NCR--BS link is assumed to be at the same height as the BS. Hence,
\begin{align}
    d_1=\sqrt{x^2+10^2},
    \qquad
    d_2 = D_{\rm tot}-x,
\end{align}
where $x$ is the horizontal UE-to-RIS/NCR distance and $D_{\rm tot}$ is the total horizontal UE--BS separation.

For a fair comparison, the total transmit power budget is the same for all the schemes and set as $P_{\rm tot}=0.2$\,W. For the passive RIS benchmark, only the UE transmit power is constrained, since the RIS is assumed passive and does not consume radiated power. Therefore, the RIS rate is evaluated using the optimal equal power allocation across the two polarization branches. For the NCR, we consider two different power-budget models. In the first one, a joint total-power budget is imposed, such that the sum of the UE transmit power and the NCR radiated power is limited by $P_{\rm tot}$. In the second one, separate constraints are imposed on the UE transmit power and the NCR radiated power, each set to $P_{\rm tot}/2$. For both NCR models, we plot the achievable rates of the optimal two-stream and one-stream designs separately.

The performance is evaluated as a function of the horizontal position of the RIS/NCR between the UE and the BS. In particular, we consider two deployment scenarios with total horizontal distance $D_{\rm tot}=50$\,m and $D_{\rm tot}=500$\,m, respectively. These two cases represent a short-range regime and a more challenging long-range regime.\footnote{Although the deployment distances fall within the radiative near-field of very large arrays, the objective of this paper is not to characterize near-field wavefront effects but to compare RIS and NCR under an analytically tractable propagation model. Therefore, both schemes are evaluated using the same far-field array response while the actual propagation distances are reflected through the distance-dependent pathloss coefficients.} 

Fig.~\ref{fig1} shows the achievable rates versus the UE-to-RIS/NCR distance for all seven considered cases. The results reveal how the preferred technology depends strongly on the deployment geometry and the adopted power-budget model. In particular, the passive RIS benefits significantly from increasing the number of elements, due to the quadratic array gain in $N$, while the NCR can be more competitive when placed closer to the UE, where the first hop is sufficiently strong and the amplification can be effectively exploited.

In Fig.~\ref{fig1}, the passive RIS remains competitive over a wide range of positions, especially for larger values of $N$. Moreover, for the NCR, the two-stream design consistently outperforms the one-stream design, indicating that exploiting both polarization branches is beneficial in this regime. The joint power-allocation strategy further provides additional design flexibility and leads to improved performance compared to the case with separate power constraints.

In contrast, when the total distance is increased to $D_{\rm tot}=500$ m, as shown in Fig.~\ref{fig2}, the behavior changes significantly. Due to the severe cascaded pathloss affecting the RIS-assisted link, the NCR becomes the preferable solution over most of the deployment region. In fact, even the one-stream NCR design outperforms the RIS in nearly all cases, except when the UE is located extremely far from the RIS/NCR, where the first hop becomes prohibitively weak. These results highlight that, in long-range scenarios, active amplification at the repeater can effectively compensate for pathloss, whereas the passive RIS struggles despite its array gain.
\begin{figure}[t!]
        \centering
	\begin{overpic}[width=0.89\columnwidth,trim=1.8cm 0.1cm 1.5cm 0.5cm,clip,tics=10]{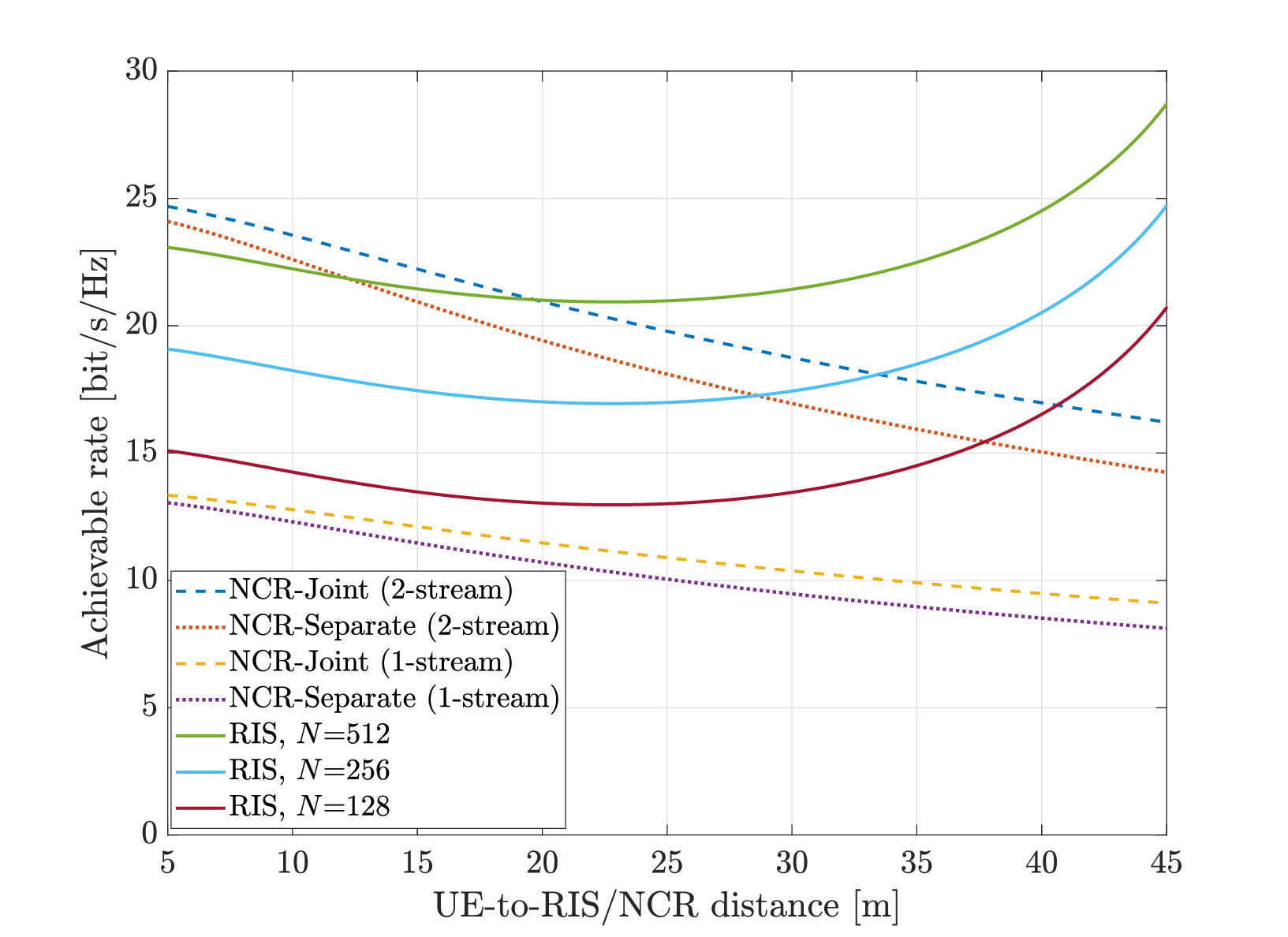}
\end{overpic} 
\vspace{-3mm}
        \caption{Achievable rate versus the UE-to-RIS/NCR distance for a total horizontal distance of $D_{\rm tot}=50$\,m.  }
        \label{fig1}
        \vspace{-6mm}
\end{figure}

\begin{figure}[t!]
        \centering
	\begin{overpic}[width=0.89\columnwidth,trim=1.8cm 0.1cm 1.5cm 0.5cm,clip,tics=10]{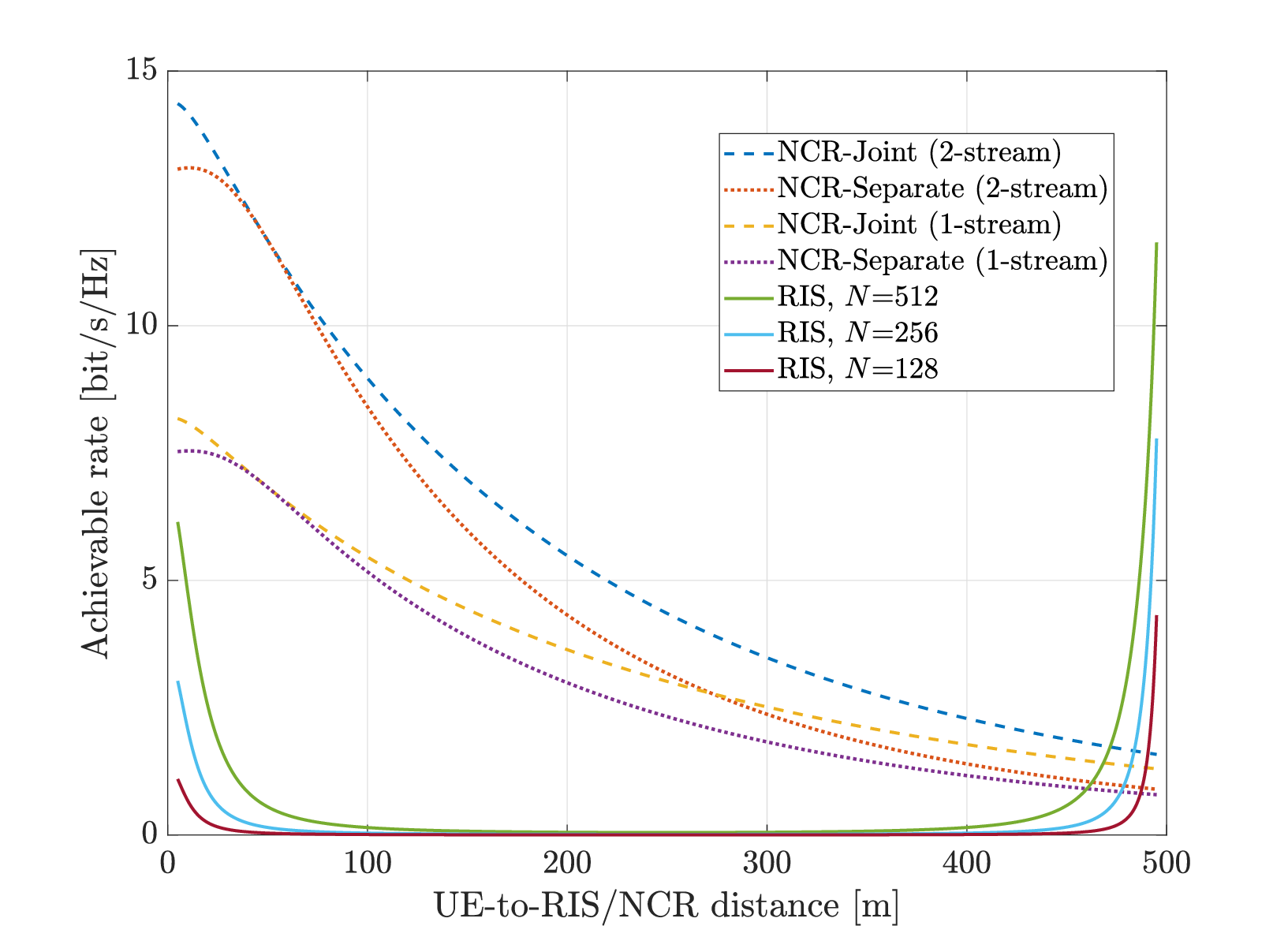}
\end{overpic} 
\vspace{-3mm}
        \caption{Achievable rate versus the UE-to-RIS/NCR distance for a total horizontal distance of $D_{\rm tot}=500$\,m. }
        \label{fig2}
        \vspace{-6mm}
\end{figure}

\vspace{-2mm}

\section{Conclusions}
\vspace{-2mm}

This paper presented a comparative analysis of dual-polarized RIS- and NCR-assisted communications under a unified framework. The results showed that while passive RISs can achieve competitive performance in short-range scenarios by exploiting large array gains, NCRs become more effective in long-range deployments due to their active amplification capability. Overall, the findings highlight that the preferable technology is highly geometry-dependent and should be selected based on the deployment conditions and system constraints.

The analysis assumes a blocked direct link, identical polarization branches, perfect channel state information, and ideal hardware. In practical deployments, polarization mismatch, channel-estimation errors, amplifier nonlinearities, and other hardware impairments may reduce the achievable gains, particularly for coherent RIS beamforming and high-gain NCR operation. 

\vspace{-2mm}

\bibliographystyle{IEEEtran}
\bibliography{IEEEabrv,refs}

\end{document}